\newtheorem{lemma}{Lemma}
\definecolor{myc1}{rgb}{0,0,0}
\begin{document}
	\title{{A Joint Communication and Computation Framework for Digital Twin over Wireless Networks} 
	}
	\author{
		Zhaohui Yang,
			Mingzhe Chen,
			Yuchen  Liu, and
			Zhaoyang~Zhang
		\thanks{Z. Yang and Z. Zhang are with the College of Information Science and Electronic Engineering, Zhejiang University, Hangzhou, Zhejiang 310027, China, and Zhejiang Provincial Key Lab of Information Processing, Communication and Networking (IPCAN), Hangzhou, Zhejiang, 310007, China. Z. Yang is also with Zhejiang Lab,  Hangzhou, Zhejiang, 311100, China. (e-mails: yang\_zhaohui@zju.edu.cn, ning\_ming@zju.edu.cn) }
  \thanks{M. Chen is with the Department of Electrical and Computer Engineering and Institute for Data Science and Computing, University of Miami, Coral Gables, FL, 33146 USA.  (e-mail:mingzhe.chen@miami.edu)}
		\thanks{Y. Liu is with Department of Computer Science, North Carolina State University, Raleigh, NC, 27606, USA. (e-mail:yuchen.liu@ncsu.edu)}
	}

	\maketitle
	\begin{abstract}
In this paper, the problem of low-latency communication and computation resource allocation for digital twin (DT) over wireless networks is investigated. In the considered model, multiple physical devices in the physical network (PN) needs to frequently offload the computation task related  data to the digital network twin (DNT), which is generated and  controlled by the central server. Due to limited energy budget of the physical devices, both computation accuracy and wireless transmission power must be considered during the DT procedure. This joint communication and computation problem is formulated as an optimization problem  whose goal is to minimize the overall transmission delay of the system under total PN energy  and DNT model accuracy constraints. To solve this problem, an alternating algorithm with iteratively solving device scheduling, power control, and data offloading subproblems. For the device scheduling subproblem, the optimal solution is obtained in closed form through the dual method. For the special case with one physical device,  the optimal number of transmission times is reveled. Based on the theoretical findings, the original problem is transformed into a simplified problem and the optimal device scheduling can be found.  Numerical results verify that the proposed algorithm can reduce the transmission delay of the system by up to 51.2\% compared to the conventional schemes. 
		
	\end{abstract}
	\begin{IEEEkeywords}
		Digital twin (DT), delay minimization, joint communication and computation design. 
	\end{IEEEkeywords}
	\IEEEpeerreviewmaketitle
	
	\section{Introduction}
Metaverse, considered as a new generation of the Internet, is envisioned to build a digital world where people can meet and interact in real time via integrating various technologies, such as extended reality (XR), digital twin (DT), holographic, sensing, communication, and computing \cite{khan2022digital,jiang2022towards,tang2022roadmap,cai2022compute,wang2022survey,xu2022full,chang20226g,sun2016edgeiot}. 	
Recently, {\color{myc1}DT technology is envisioned to act as an  important role for the modern communication society \cite{saad2019vision,jones2020characterising,liu2021review,xu2022edge,zhang2018optimal}, in particular for the future applications including Metaverse, as shown in Fig.~\ref{fig0}.} DT is the process of using information technology to digitally define and model physical entities. The core concept is to realize feedback optimization of physical entities through the simulation, control and prediction of DTs. Due to the combination of digital and physical worlds,  DT technology  has many advantages \cite{he2018preserving,han2013optimizing,zhou2020pirate,qu2021service,arnold2020cooperative,le2022survey,arnold2019survey,zhan2020big}. 
The core element of a DT is data. It originates from physical entities, operating systems, sensors, etc. It covers simulation models, environmental data, physical object design data, maintenance data, operation data, etc., and runs through the entire operation of physical objects.
The DT is used as a data storage platform to collect various raw data, perform data fusion processing, promote the dynamic operation of each part of the simulation model, and effectively reflect various business processes. Therefore, data is the fundamental of DT applications, and without multivariate fusion data, DT applications will lose momentum.
Moreover, the main body of a DT system is a data-driven model built on the logic of physical entities and behavior. Twin data is the basis for mapping between physical objects and digital world model objects. It includes models, behavioral logic, business processes, state changes, etc., to achieve comprehensive presentation, accurate expression and dynamic monitoring of the state and behavior of physical entities in the digital world.
DT can be used to intelligently design decisions, using the large amount of historical, real-time data in the DT, combined with advanced algorithm models, to effectively reflect the state and behavior of physical objects in the digital world.
Thus, the data scurity is important for DT \cite{yang2023secure}.
At the same time, through the simulation experiment and analysis and prediction in the digital world, it provides a decision-making basis for the instruction formulation of entity objects and the further optimization of the process system, which greatly improves the efficiency of analysis and decision-making
\cite{yu20226g,akyildiz2022metaverse,chen2023big}.
 
	\begin{figure*}
		\centering
		\includegraphics[width=6.5in]{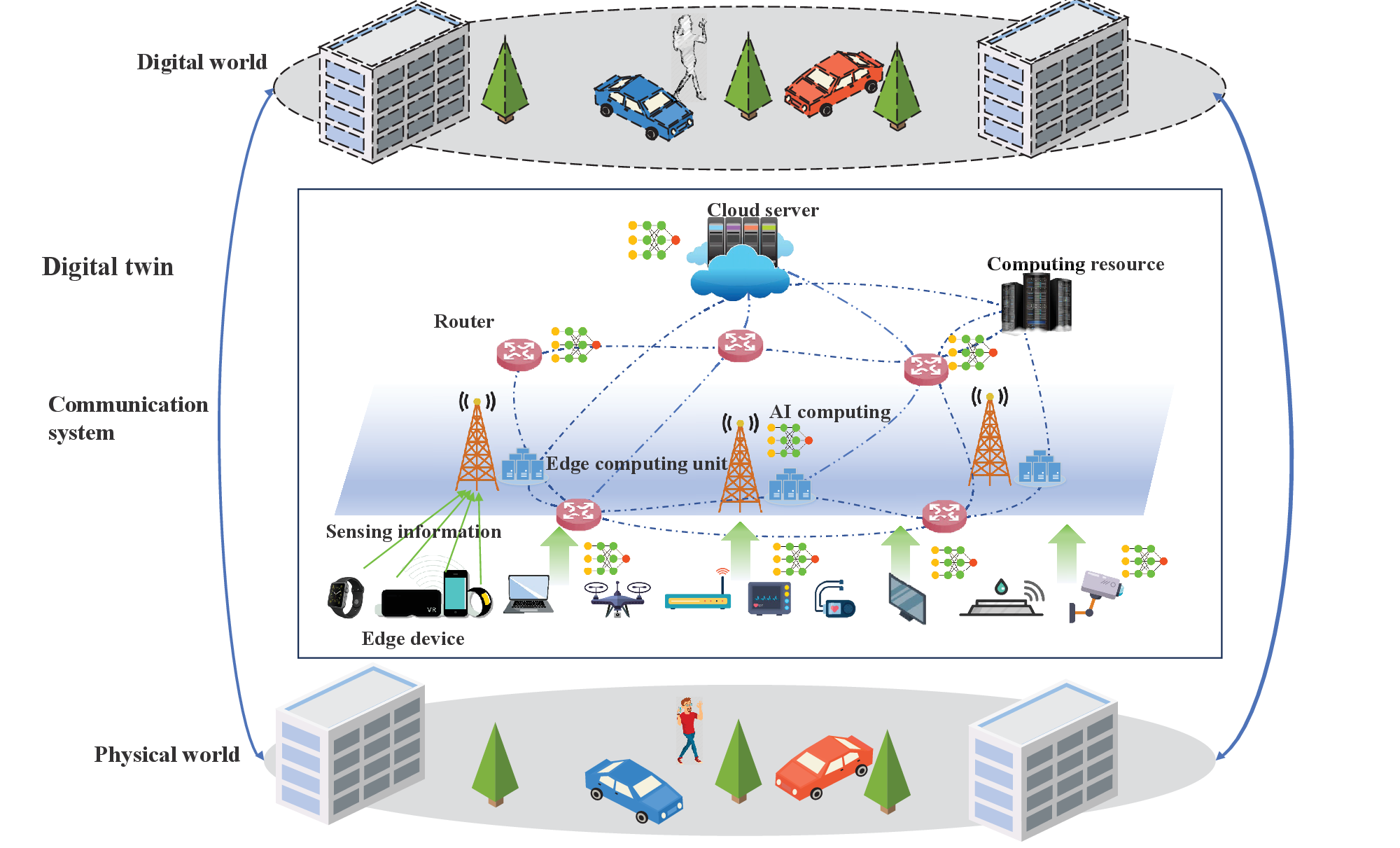}
		\caption{ {\color{myc1} The application scenario of DT based communication system.}}\label{fig0}
	\end{figure*}	
Due to the above distinctive advantages, DT has many emerging applications \cite{jones2020characterising,zhan2020learning,wang2022task,wang2023task}. Implementing DT requires a joint design of the physical and application layer, which can involve the multi-tier framework \cite{wang2022task,wang2023task}. 
With the help of high-tech means such as geographic information technology and three-dimensional virtualization, the smart DT community restores the situation of community buildings and traffic roads with high accuracy, integrates two-dimensional models of each floor of the building, indoor and outdoor integration, and displays the key data of the core operation system of the community in real time from the scene display of the three dimensions of smart operation and maintenance, intelligent transportation, and digital life.
 Based on the visual data, a city-level DT system can be built on the basis of fully integrating the information resources of various fields of the city, accurately reproducing the management elements of a wide range of urban fields, and realizing all-round dynamic perception of a wide range of urban operation from the global perspective to the micro field.

	
	{\color{myc1}Since the physical world and the digital world needs to communicate, DT over wireless networks has attracted a lot of attentions \cite{khan2022digital}. 
	The survey of using the future sixth generation (6G) communication techniques for DT was presented in \cite{kuruvatti2022empowering}.
	A blockchain empowered federated learning framework for DT was considered in \cite{lu2020low,lu2021adaptive} to solve the edge association problem through optimizing the DT association,
	training data batch size, and bandwidth allocation. 
	The interplay between Terahertz and DT was considered in \cite{pengnoo2020digital}, where DT can be utilized to predict and simulate the  unique propagation properties of Terahertz signals.
	Furthermore, a DT  assisted mobile edge computing network was considered in 
	\cite{zheng2021learning} for Internet of vehicles. 
	In \cite{ruah2022digital}, the combination of Bayesian learning of DT was studied, where the DT trained a  Bayesian model to predict the epistemic uncertainty of the wireless communication system. 
	In \cite{hashash2022edge}, a DT system over wireless communication network was proposed to investigate the tradeoff between the accuracy and  delay of the DT system. 
	An weighted sum accuracy and system delay optimization problem was formulated in  \cite{hashash2022edge}, which was solved by using the edge continual learning. 
	However, the above works \cite{lu2020low,lu2021adaptive,pengnoo2020digital,zheng2021learning,ruah2022digital,kuruvatti2022empowering,hashash2022edge} all ignored the joint communication and computation resource allocation with  considering energy budget of wireless devices and model accuracy of the DT, even though the wireless devices are usually energy constrained. }
	
	In this paper, we consider the delay-efficient communication and computation resource allocation for a DT network with considering energy budget of wireless devices and DT model accuracy constraints. 
	Our contributions are listed as follows.
	\begin{itemize}
		\item
		We investigate the performance of DT over wireless networks, where multiple physical devices in the physical network (PN) transmit data to the digital network twin (DNT) over multiple time slots. Due to limited energy budget of the physical devices, both computation accuracy and wireless transmission power are  considered during the DT procedure.
		\item   This joint communication and computation problem is formulated as an optimization problem  whose goal is to minimize the overall transmission delay of the system under total PN energy  and DNT model accuracy constraints. To solve this problem, an alternating algorithm with iteratively solving device scheduling, power control, and task offloading subproblems. 
  \item For the special case with one physical device, we reveal the optimal number of transmission times. The optimal transmit power is derived in closed form and the original problem is then transformed into a simplified problem. 
		\item Numerical results show the superiority of the proposed algorithm compared to the conventional schemes in terms of transmission delay. 
	\end{itemize}

	The rest of this paper is organized as below.
	Section \uppercase\expandafter{\romannumeral2} presents the system model and problem formulation.
	The algorithm design is presented in Section \uppercase\expandafter{\romannumeral3}, while simulation results are given in Section \uppercase\expandafter{\romannumeral4}.
	Section \uppercase\expandafter{\romannumeral5} concludes this paper.

	\section{System Model and Problem Formulation}
	\subsection{System Model}
	\begin{figure}
		\centering
		\includegraphics[width=3.3in]{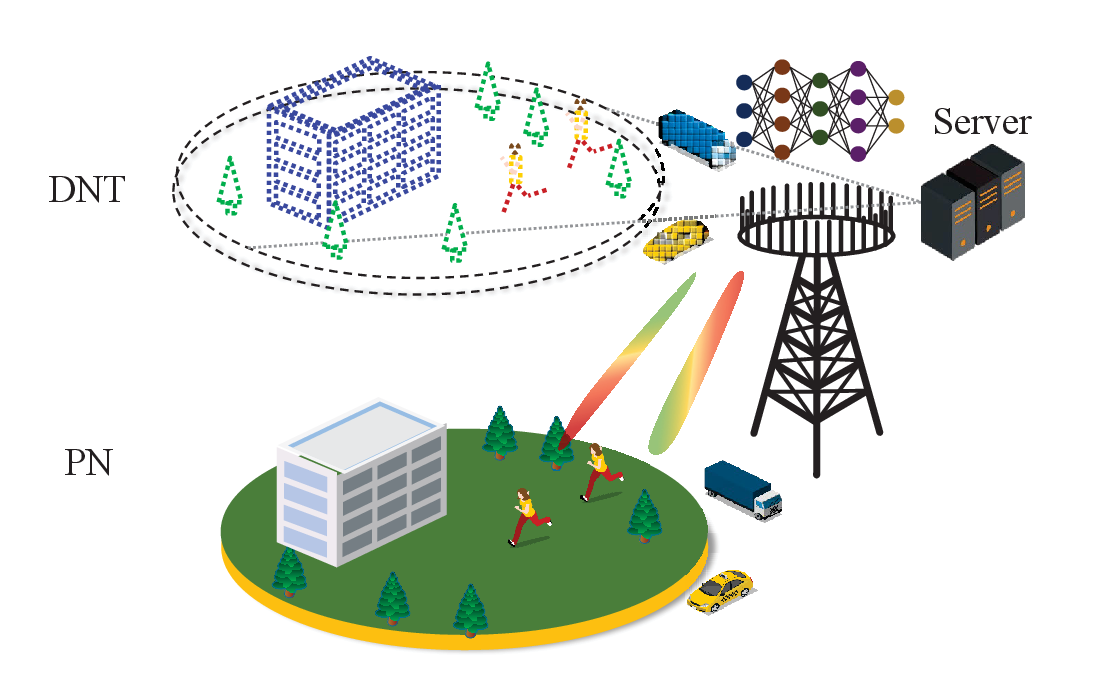}
		\caption{The considered system model of DNTs.}\label{sys0}
	\end{figure}
	
	Consider the DNTs that consist of a PN and its mapping DNT generated and controlled by a central server, as shown in Fig. \ref{sys0}.
	The total number of physical devices  in the PN is denoted by $K$. 
	The physical devices  such as base stations  and sensors need to transmit status data to the central server, which utilizes the obtained data to generate DNT.
	We consider a long time period $T$ and the time can be divided into $N$ time slots.
	The duration of each time can be calculated as $T_0=\frac T N$.
	
	The channel gain between physical device $k$  and the server at time slot $n$ is denoted by $h_{nk}$.
	At time slot $n$, the physical device $k$ in PN will generate data $D_{nk}$.
	To guarantee synchronization between the PN and the DNT, 
	the  physical device $k$  can choose to transmit data $d_{nk}$ to the server.
	The transmission rate between the  physical device $k$  and the server is  given by 
	\begin{equation}\label{sys_eq1}
		r_{nk}=B\log_2\left(1+\frac{p_{nk}h_{nk}}{\sigma^2}\right),
	\end{equation}
	where $B$ is the bandwidth of the system, $p_{nk}$ is the transmit power of the physical device $k$ at time slot $n$, and $\sigma^2$ is the noise power.
	Considering the randomness of the wireless channel, the received data at the server can contain error. 
	The error rate of the transmission at time slot $n$ can be given by \cite{chen2021a}
	\begin{equation}\label{sys_eq1_2}
		e_{nk}(p_{nk})=1-e^{-\frac{m\sigma^2}{p_{nk}h_{nk}}},
	\end{equation}
	with $m$ being a waterfall threshold.

	Since the transmitted data cannot be large than that of the remaining data, we have
	\begin{equation}\label{sys_eq2}
		\sum_{i=1}^n d_{ik} c_{ik}(p_{ik})\leq  \sum_{i=1}^n D_{ik}, \quad \forall n\in \mathcal N,k\in\mathcal K,
	\end{equation}
	where  
	\begin{equation}\label{sys_eq2_2}
		c_{ik}(p_{ik})= \left\{ \begin{array}{ll}
			1 & \textrm{with probability $e^{-\frac{m\sigma^2}{p_{ik}h_{ik}}}$}\\
			0 & \textrm{with probability $1-e^{-\frac{m\sigma^2}{p_{ik}h_{ik}}}$}
		\end{array}\right. ,
	\end{equation}
	$\mathcal N=\{1,\cdots, N\}$, and  $\mathcal K=\{1,\cdots, K\}$.
	
	In each time slot,
	let $x_{nk}\in\{0,1\}$ denote whether physical device $k$ in the PN transmits data to the server.
	The notation 
	$x_{nk}=1$ implies that physical device $k$ transmits data to the BS; otherwise we have $x_{nk}=0$.
	In this paper, frequency division multiple access (FDMA) is considered for the uplink transmission. Due to limited resource blocks of the communication system, the maximum number of associated users at each time slot is limited, i.e., 
	\begin{equation}\label{sys_eq2}
		\sum_{k=1}^K x_{nk}\leq K_0, \quad \forall n\in\mathcal N,
	\end{equation}
	where $K_0$ is the maximum number of available resource blocks for the communication system. 
	
	At each time slot $n$, the wireless transmission delay and energy can be derived as
	\begin{equation}\label{sys_eq3}
		t_{nk}=\frac{d_{nk}}{r_{nk}}.
	\end{equation}
	Since multiple users can simultaneously communicate with the server, the transmission delay of the PN at time slot $n$ can be given by
	\begin{equation}
		t_{n}=\max_{k\in\mathcal K}x_{nk}t_{nk}.
	\end{equation}
	
	Moreover, at each time slot $n$, the wireless transmission  energy can be given by 
	\begin{equation}
		e_{nk}=t_{nk} p_{nk}.
	\end{equation}
	
	The accuracy of the DNT requires more data from the PN, while large data can lead to high transmission delay and energy. 
	Thus, it is of importance to investigate the tradeoff between the accuracy and the wireless cost of the PN including delay and energy. 
	The accuracy of each time slot $n$ can be modeled as
	\begin{equation}\label{sys1accu}
		a_n=f\left(\sum_{i=1}^n \sum_{k=1}^K d_{ik}c_{ik}(p_{ik}),\sum_{i=1}^n \sum_{k=1}^KD_{ik}\right),
	\end{equation}
	where function $f(x,y)\in[0,1]$ and $f(x,y)$ increases with $x$ while decreases with $y$.
	The function of $f(x,y)$ can be obtained through simulations such as \cite{hashash2022edge}. 
	For example, we can set $f(x,y)=\left(\frac{x}{y}\right)^{\alpha}$, where $\alpha>0$.

	\subsection{Problem Formulation}
	With the considered system model, our aim is minimize the transmission delay of the system with both accuracy and energy constraints.
	Mathematically, the optimization problem can be formulated as:
	\begin{subequations}\label{sys1min1}
		\begin{align}
			\mathop{\min }_{\boldsymbol x, \boldsymbol d, \boldsymbol p}\quad
			&\sum_{n=1}^N  \max_{k\in\mathcal K}x_{nk}t_{nk} \tag{\theequation}\\
			\textrm{s.t.}\quad\:\:
			& f\left(\sum_{i=1}^n \sum_{k=1}^K d_{ik}c_{ik}(p_{ik}),\sum_{i=1}^n \sum_{k=1}^KD_{ik}\right) \geq A_n, 
   \\  &\qquad\qquad\qquad\qquad\qquad\qquad\qquad \forall n\in\mathcal N,\\
			& \sum_{n=1}^N t_{nk} p_{nk} \leq Q_k, \quad \forall k\in\mathcal K, \\
			&\sum_{i=1}^n d_{ik} c_{ik}(p_{ik})\leq  \sum_{i=1}^n D_{ik}, \quad \forall n\in \mathcal N,k\in\mathcal K,\\
			&\sum_{i=n}^{n+\tau } x_{ik} \geq  \beta_k \tau, \quad \forall n\in\mathcal N,k\in\mathcal K,  \\
			&	\sum_{k=1}^K x_{nk}\leq K_0, \quad \forall n\in\mathcal N,\\
			&d_{nk}\geq 0, \quad \forall n\in\mathcal N, k\in\mathcal K, \\
   			& x_{nk}\in\{0,1\}, \quad \forall n\in\mathcal N, k\in\mathcal K, \\
			&0\leq p_{nk}\leq P_k,   \quad \forall n\in\mathcal N,k\in\mathcal K,\\
   			& 0\leq t_{nk}\leq T_0, \quad \forall n\in\mathcal N,k\in\mathcal K,
      \end{align}
	\end{subequations}
where $\boldsymbol x=[x_{11},\cdots,x_{1K},\cdots,x_{NK}]^T$,
	$\boldsymbol d=[d_{11},\cdots, $ $d_{1K},\cdots,d_{NK}]^T$,
	$\boldsymbol p=[p_{11},\cdots,p_{1K},\cdots,$ $p_{NK}]^T$, $Q_k$ is the maximum energy of physical device $k$,  $\beta_k\in(0,1]$ is a parameter to ensure that the physical device $k$ and the server should have regular communication, $\tau$ is a constant parameter to ensure that the each physical device and the server must have at least one communication in $\tau$ time slots, and $P_k$ is the maximum transmit power of physical device $k$.
	
	Problem \eqref{sys1min1} is mixed integer optimization problem, which is generally hard to solve due to the following three difficulties. 
	The first difficulty is the complicated accuracy function \eqref{sys1accu}, of which the explicit expression is hard to obtain.
	The second difficulty lies in the nonconvex objective function \eqref{sys1min1} and constraints (\ref{sys1min1}a)-(\ref{sys1min1}c). 
	The third difficulty lies in the integer scheduling variable $x_{nk}$.
	To solve problem \eqref{sys1min1}, we propose an iterative algorithm in the following section.
	\section{Algorithm Design}
	In this section, we present the proposed alternating algorithm to solve problem \eqref{sys1min1}, which alternative solves three subproblems at each iteration., i.e., device  scheduling subproblem, power control subproblem, and data offloading subproblem. 
	\subsection{Device Scheduling Subproblem}
	With given transmission power and data offloading variables in problem \eqref{sys1min1},  the device scheduling subproblem can be given by
	\begin{subequations}\label{alg2min1}
		\begin{align}
			\mathop{\min }_{\boldsymbol x}\quad
			&\sum_{n=1}^N  \max_{k\in\mathcal K}x_{nk}t_{nk} \tag{\theequation}\\
			\textrm{s.t.}\quad\:\:
			&\sum_{i=n}^{n+\tau } x_{ik} \geq  \beta_k \tau, \quad \forall n\in\mathcal N,k\in\mathcal K,  \\
			&	\sum_{k=1}^K x_{nk}\leq K_0, \quad \forall n\in\mathcal N,\\
			& x_{nk}\in\{0,1\}, \quad \forall n\in\mathcal N.    
		\end{align}
	\end{subequations}
	Problem \eqref{alg2min1} is a linear integer optimization problem.
	Relaxing the integer constraints and introducing   slack variables $y$, 
	Problem \eqref{alg2min1} becomes,
	\begin{subequations}\label{alg2min1_2}
		\begin{align}
			\mathop{\min }_{\boldsymbol x,\boldsymbol y}\quad
			&\sum_{n=1}^N  y_n \tag{\theequation}\\
			\textrm{s.t.}\quad\:\:
			&\sum_{i=n}^{n+\tau } x_{ik} \geq  \beta_k \tau, \quad \forall n\in\mathcal N,k\in\mathcal K,  \\
			&	\sum_{k=1}^K x_{nk}\leq K_0, \quad \forall n\in\mathcal N,\\
			&y_n\geq  x_{nk}t_{nk}, \quad \forall n\in\mathcal N,k\in\mathcal K,\\
			& x_{nk}\in\{0,1\} \quad \forall n\in\mathcal N,   
		\end{align}
	\end{subequations}
	where $\boldsymbol y=[y_1,\cdots,y_N]^T$.
	Problem \eqref{alg2min1_2} is a linear integer optimization problem, which is generally hard to obtain the globally optimal solution. 
 In the following, we use the dual method to solve the relaxed problem of \eqref{alg2min1_2} with replacing constraint  (\ref{alg2min1_2}d) with continuous constraint $x_{nk}\in[0,1]$.
 It can be also proved that the obtained solution of the relaxed problem automatically satisfies the integer constraint in  (\ref{alg2min1_2}d).
Through replacing replacing constraint  (\ref{alg2min1_2}d) with continuous constraint $x_{nk}\in[0,1]$, problem (\ref{alg2min1_2}) can be transformed into 
	\begin{subequations}\label{alg2min1_2_2}
		\begin{align}
			\mathop{\min }_{\boldsymbol x,\boldsymbol y}\quad
			& \sum_{n=1}^N  y_n  \tag{\theequation}\\
			\textrm{s.t.}\quad\:\:
			&\sum_{i=n}^{n+\tau } x_{ik} \geq  \beta_k \tau, \quad \forall n\in\mathcal N,k\in\mathcal K,  \\
			&	\sum_{k=1}^K x_{nk}\leq K_0, \quad \forall n\in\mathcal N,\\
			&y_n\geq  x_{nk}t_{nk}, \quad \forall n\in\mathcal N,k\in\mathcal K,\\
			& x_{nk}\in[0,1] \quad \forall n\in\mathcal N.  
		\end{align}
	\end{subequations}
Since both objective function and constraints of problem \eqref{alg2min1_2_2} is convex, problem \eqref{alg2min1_2_2} is a convex problem and the optimal solution can be obtained by using the dual method. 
To obtain the optimal solution of problem \eqref{alg2min1_2_2}, we provide the following lemma. 
\begin{lemma}
The optimal solution of problem \eqref{alg2min1_2_2} is 
\begin{eqnarray}\label{alg2min1_2Th1eq1}
x_{nk}^* = \left\{ \begin{array}{ll}
	1 & \!\!\!\textrm{if\,\,\,} \lambda_{2n}+\lambda_{3nk}-\sum_{i=\max\{n-\tau,1\}}^n \lambda_{1ik}<0\\
	0 & \!\!\!\textrm{else }
\end{array} \right. \!\!\!,
\end{eqnarray}
and 
\begin{equation}\label{alg2min1_2Th1eq2}
    y_n^*=\max_{k\in\mathcal K}x_{nk}^*t_{nk}, 
\end{equation}
where $\lambda_{1nk}\geq 0$, $\lambda_{2n}\geq 0$, and $0\leq \lambda_{3nk}\leq 1$ are the corresponding dual variables associated with constraints (\ref{alg2min1_2_2}a)-(\ref{alg2min1_2_2}c).
\end{lemma}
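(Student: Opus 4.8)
The plan is to exploit strong duality. Problem \eqref{alg2min1_2_2} has a linear objective and affine constraints over a nonempty, bounded polytope, so Slater's condition holds trivially, strong duality is in force, and the KKT conditions are both necessary and sufficient for optimality; I would therefore characterise the optimum by characterising any KKT point. I introduce multipliers $\lambda_{1nk}\ge0$ for the covering constraints (\ref{alg2min1_2_2}a), $\lambda_{2n}\ge0$ for the resource-block caps (\ref{alg2min1_2_2}b), and $\lambda_{3nk}\ge0$ for the epigraph constraints (\ref{alg2min1_2_2}c), while keeping the box $x_{nk}\in[0,1]$ as an explicit constraint on the inner minimisation. The partial Lagrangian then separates additively over the $y_n$ and the $x_{nk}$, so the two families of variables can be minimised independently.

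First I would dispose of the $y_n$. Each $y_n$ enters the Lagrangian only through the coefficient $1-\sum_{k}\lambda_{3nk}$, so boundedness of the inner minimisation over the free variable $y_n$ forces $\sum_{k}\lambda_{3nk}=1$ for every $n$; together with $\lambda_{3nk}\ge0$ this yields the stated range $0\le\lambda_{3nk}\le1$, and complementary slackness/epigraph tightness gives \eqref{alg2min1_2Th1eq2}, namely $y_n^*=\max_k x_{nk}^*t_{nk}$. Next, collecting every term that contains a fixed $x_{nk}$ gives a purely linear expression with coefficient
\begin{equation}
g_{nk}=\lambda_{2n}+\lambda_{3nk}t_{nk}-\sum_{i=\max\{n-\tau,1\}}^{n}\lambda_{1ik},
\end{equation}
where the last sum collects exactly those instances $i$ of (\ref{alg2min1_2_2}a) whose window $\{i,\dots,i+\tau\}$ contains the index $n$. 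Minimising a linear function over $[0,1]$ is bang--bang, so the inner minimiser is $x_{nk}^*=1$ when $g_{nk}<0$ and $x_{nk}^*=0$ when $g_{nk}>0$, which (up to the $t_{nk}$ weight carried by the epigraph multiplier) is precisely the threshold \eqref{alg2min1_2Th1eq1}.

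The step I expect to be the crux goes beyond this mechanical KKT bookkeeping, and has two parts. The first is \emph{existence of consistent multipliers}: the sign pattern of $g_{nk}$ must be realisable by some $\lambda_1,\lambda_2,\lambda_3\ge0$ that \emph{simultaneously} satisfy primal feasibility and complementary slackness, so that the covering constraints (\ref{alg2min1_2_2}a) and the caps (\ref{alg2min1_2_2}b) are honoured at once. Rather than constructing the duals by hand, I would invoke strong duality to assert that an optimal dual tuple exists and then read the primal off it. The second and harder point is \emph{automatic integrality}: the bang--bang rule returns $x_{nk}^*\in\{0,1\}$ everywhere except on the degenerate set $\{g_{nk}=0\}$, where the objective is indifferent. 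I would argue that any fractional optimum arising there can be rounded to an integral one of equal cost without violating (\ref{alg2min1_2_2}a)--(\ref{alg2min1_2_2}b); this is what legitimises dropping the relaxation so that the closed form also solves the integer problem \eqref{alg2min1_2}. Showing that the ties can always be resolved feasibly — in essence a total-unimodularity argument exploiting the interval (consecutive-ones) structure of the covering matrix, whose rows are sliding windows $\{n,\dots,n+\tau\}$ — is where the real work lies.
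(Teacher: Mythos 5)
Your proposal is correct and is essentially the paper's own proof: the paper likewise dualizes constraints (\ref{alg2min1_2_2}a)--(\ref{alg2min1_2_2}c), observes that the Lagrangian is linear in $\boldsymbol x$ so that minimizing over the box $[0,1]$ is bang--bang with the window sum $\sum_{i=\max\{n-\tau,1\}}^{n}\lambda_{1ik}$, and uses boundedness of the inner minimization in $y_n$ together with tightness of the epigraph constraint to obtain \eqref{alg2min1_2Th1eq2}. If anything, your bookkeeping is sharper than the paper's: the coefficient of $x_{nk}$ in the Lagrangian really is $\lambda_{2n}+\lambda_{3nk}t_{nk}-\sum_{i=\max\{n-\tau,1\}}^{n}\lambda_{1ik}$ (the stated threshold \eqref{alg2min1_2Th1eq1} drops the $t_{nk}$ weight, as you note), the correct boundedness condition on the free $y_n$ is $\sum_{k\in\mathcal K}\lambda_{3nk}=1$ for each $n$ rather than the paper's per-$(n,k)$ condition $1-\lambda_{3nk}\geq 0$, and the tie case $g_{nk}=0$ with its attendant integrality question, for which you sketch a consecutive-ones/total-unimodularity resolution, is exactly the point the paper passes over with the bare assertion that the relaxed solution ``automatically satisfies'' the integer constraint.
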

\begin{proof}
The dual function of problem \eqref{alg2min1_2_2} can be formulated by 
\begin{eqnarray}\label{alg2min1_2eq1}
	\begin{aligned}
&\mathcal L_1(\boldsymbol x, \boldsymbol y, \boldsymbol \lambda_1, \boldsymbol \lambda_2,\boldsymbol \lambda_3)
\nonumber\\=&  \sum_{n=1}^N  y_n  + \sum_{n=1}^N
\sum_{k=1}^K \lambda_{1nk} \left(\beta_k \tau-\sum_{i=n}^{n+\tau } x_{ik}\right)
\nonumber\\&
+\sum_{k=1}^K \lambda_{2n} \left(\sum_{k=1}^K x_{nk}-K_0\right)
\nonumber\\&+\sum_{n=1}^N
\sum_{k=1}^K \lambda_{3nk} \left(  x_{nk}t_{nk}-y_n\right),
	\end{aligned}
\end{eqnarray}
where $\boldsymbol \lambda_1=[\lambda_{111},\cdots,\lambda_{1NK}]^T$,
$\boldsymbol \lambda_2=[\lambda_{21},\cdots,\lambda_{2N}]^T$, and
$\boldsymbol \lambda_3=[\lambda_{311},\cdots,\lambda_{3NK}]^T$.
Considering constraint (\ref{alg2min1_2_2}d) and $\mathcal L_1(\boldsymbol x, \boldsymbol y, \boldsymbol \lambda_1, \boldsymbol \lambda_2,\boldsymbol \lambda_3)$ is a linear function with respect to $\boldsymbol x$, the optimal solution of $\boldsymbol x$ can be derived as \eqref{alg2min1_2Th1eq1}. 
Further considering that $y_n$ only has lower bound as shown in (\ref{alg2min1_2_2}c),  the dual function $\mathcal L_1(\boldsymbol x, \boldsymbol y, \boldsymbol \lambda_1, \boldsymbol \lambda_2,\boldsymbol \lambda_3)$ is unbounded unless $1-\lambda_{3nk}\geq 0$.
Thus, we always have $1-\lambda_{3nk}\geq 0$ and the optimal solution of $y_n^*$ should be the minimum value satisfying (\ref{alg2min1_2_2}c), as shown in \eqref{alg2min1_2Th1eq2}.
\end{proof}

According to Lemma 1, the optimal solution of $\boldsymbol x$ and $\boldsymbol y$ can be obtained with given dual variables. Besides, the value of $x_{nk}$ is always zero or one, which automatically satisfies the integer constraint (\ref{alg2min1_2}d).
With obtained $\boldsymbol x$ and $\boldsymbol y$, the values of the dual variables can be updated with the sub-gradient method \cite{boyd2004convex}:
\begin{equation}
  \lambda_{1nk}(t+1)=  \left[\lambda_{1nk} (t)+\kappa(t)\left(\beta_k \tau-\sum_{i=n}^{n+\tau } x_{ik}\right)\right]^+,
\end{equation} 
\begin{equation}
 \lambda_{2n}(t+1)=  \left[\lambda_{2n} (t)+\kappa(t)   \left(\sum_{k=1}^K x_{nk}-K_0\right)\right]^+,
\end{equation} 
\begin{equation}
 \lambda_{3nk}(t+1)= \left. (\lambda_{3nk} (t)+\kappa(t) \left(  x_{nk}t_{nk}-y_n\right))\right|_0^1,
\end{equation} 
where $(t)$ denotes the value of variable in the $t$-th iteration, $\kappa(t)$ is the dynamatic stepsize of the dual method, $[x]^+=\max\{x,0\}$, and $x_a^b=\max\{\min\{x,a\},b\}$. 
Through iteratively updating the  value of $\boldsymbol x$ and $\boldsymbol y$ with the dual variables, the optimal solution is obtained. 

	\subsection{Power Control Subproblem}
	With given device scheduling and data offloading variables in problem \eqref{sys1min1},  the  power control subproblem can be formulated as 
	\begin{subequations}\label{alg2min2}
		\begin{align}
			\mathop{\min }_{\boldsymbol p}\quad
			&\sum_{n=1}^N  \max_{k\in\mathcal K}x_{nk}t_{nk} \tag{\theequation}\\
			\textrm{s.t.}\quad\: 
			& a_n \geq A_n, \quad \forall n\in\mathcal N,\\
			& \sum_{n=1}^N t_{nk} p_{nk} \leq Q_k, \quad \forall k\in\mathcal K, \\
			&\sum_{i=1}^n d_{ik} c_{ik}(p_{ik})\leq  \sum_{i=1}^n D_{ik}, \quad \forall n\in \mathcal N,k\in\mathcal K,\\
			&0\leq p_{nk}\leq P_k,   \quad \forall n\in\mathcal N,\\
   			& 0\leq t_{nk}\leq T_0, \quad \forall n\in\mathcal N.
      \end{align}
	\end{subequations}
 
 Problem \eqref{alg2min2} is hard to solve due to the complicated constraints  (\ref{alg2min2}a)-(\ref{alg2min2}c).
	In order to handle constraint (\ref{alg2min2}c), we use the expect value of  $c_{ik}(p_{ik})$ to represent the value of $c_{ik}(p_{ik})$, i.e., we have 
	\begin{equation}\label{sys_eq2_2_1}
		c_{ik}(p_{ik})= e^{-\frac{m\sigma^2}{p_{ik}h_{ik}}}.
	\end{equation}
 
	Without loss of generality, we consider the expression of accuracy function  $f(x,y)=\left(\frac{x}{y}\right)^{\alpha}$ with $\alpha$ being the constant to be determined in the simulations in the following analysis.
	Further substituting the expressions of $t_{nk}=\frac{d_{nk}}{r_{nk}}$ and $r_{nk}=B\log_2\left(1+\frac{p_{nk}h_{nk}}{\sigma^2}\right)$ as well as introducing slack variables $t_{nk}$ and $z_n$, problem \eqref{alg2min2} can be equivalent to
	\begin{subequations}\label{alg2min2_1}
		\begin{align}
			\mathop{\min }_{\boldsymbol p,\boldsymbol t,\boldsymbol z}\quad
			&\sum_{n=1}^N z_n \tag{\theequation}\\
			\textrm{s.t.}\quad\:\:
			&  z_n \geq x_{nk}t_{nk}, \quad \forall n\in \mathcal N,k\in\mathcal K,\\
			& t_{nk}\geq \frac{d_{nk}}{B\log_2\left(1+\frac{p_{nk}h_{nk}}{\sigma^2}\right)}, \quad \forall n\in \mathcal N,k\in\mathcal K,\\
			&\sum_{i=1}^n \sum_{k=1}^K d_{ik} e^{-\frac{m\sigma^2}{p_{ik}h_{ik}}} \geq A_n^{1/\alpha} \sum_{i=1}^n \sum_{k=1}^KD_{nk}, \quad \forall n\in\mathcal N,\\
			& \sum_{n=1}^N t_{nk} p_{nk} \leq Q_k, \quad \forall k\in\mathcal K, \\
			&\sum_{i=1}^n d_{ik}  e^{-\frac{m\sigma^2}{p_{ik}h_{ik}}}\leq  \sum_{i=1}^n D_{ik}, \quad \forall n\in \mathcal N,k\in\mathcal K,\\
			&0\leq p_{nk}\leq P_k, \quad \forall n\in\mathcal N, \\
   &0\leq t_{nk}\leq T_0, \quad \forall n\in\mathcal N,
		\end{align}
	\end{subequations}
	where $\boldsymbol z=[z_1,\cdots,z_N]^T$.
 	Problem \eqref{alg2min2_1} is  nonconvex due to  constraints 
	(\ref{alg2min2_1}b)-(\ref{alg2min2_1}e).
	To handle the nonconvexity of (\ref{alg2min2_1}b) and (\ref{alg2min2_1}d), we use variable  substitution. 
	Introducing a new variable $q_{nk}=t_{nk}p_{nk}$ and replacing $p_{nk}$ with $q_{nk}$, problem \eqref{alg2min2_1} becomes
	\begin{subequations}\label{alg2min2_2}
		\begin{align}
			\mathop{\min }_{\boldsymbol q,\boldsymbol t,\boldsymbol z}\quad
			&\sum_{n=1}^N z_n \tag{\theequation}\\
			\textrm{s.t.}\quad\:\:
			&  (\ref{alg2min2_1}\text{a}),~(\ref{alg2min2_1}\text{f}),~(\ref{alg2min2_1}\text{g}) \nonumber\\
			& t_{nk}B\log_2\left(1+\frac{q_{nk}h_{nk}}{\sigma^2t_{nk}}\right)\geq {d_{nk}}, \quad \forall n\in \mathcal N,k\in\mathcal K,\\
			&\sum_{i=1}^n \sum_{k=1}^K d_{ik} e^{-\frac{m\sigma^2t_{ik}}{q_{ik}h_{ik}}} \geq A_n^{1/\alpha} \sum_{i=1}^n \sum_{k=1}^KD_{ik}, \quad \forall n\in\mathcal N,\\
			& \sum_{n=1}^N q_{nk} \leq Q_k, \quad \forall k\in\mathcal K, \\
			&\sum_{i=1}^n d_{ik}  e^{-\frac{m\sigma^2t_{ik}}{q_{ik}h_{ik}}}\leq  \sum_{i=1}^n D_{ik}, \quad \forall n\in \mathcal N,k\in\mathcal K,
      \end{align}
	\end{subequations}
	where both constraints (\ref{alg2min2_2}a) and (\ref{alg2min2_2}c) are convex now. As a result, it remains to solve the complex exponential expressions in constraints  (\ref{alg2min2_2}b) and (\ref{alg2min2_2}d).
 
	In order to handle the non-convexity of  constraints (\ref{alg2min2_2}b) and (\ref{alg2min2_2}d), we use the first-order Taylor series to approximate $e^{-\frac{m\sigma^2t_{ik}}{q_{ik}h_{ik}}}$, which can be given by 
	\begin{align}\label{alg2min2_2eq1}
		&\sum_{i=1}^n \sum_{k=1}^K d_{ik}
		e^{-\frac{m\sigma^2t_{ik}^{(m)}}{q_{ik}^{(m)}h_{ik}}} \Big( 1-\frac{m\sigma^2 }{q_{ik}^{(m)}h_{ik}}(t_{ik}-t_{ik}^{(m)})
		\nonumber\\&	
  +\frac{m\sigma^2t_{ik}^{(m)}}{(q_{ik}^{(m)})^2h_{ik}}
		(q_{ik}-q_{ik}^{(m)})
		\Big)
		\geq A_n^{1/\alpha} \sum_{i=1}^n \sum_{k=1}^KD_{ik}
	\end{align}
	and 
	\begin{align}\label{alg2min2_2eq2}
		&\sum_{i=1}^n d_{ik}  
		e^{-\frac{m\sigma^2t_{ik}^{(m)}}{q_{ik}^{(m)}h_{ik}}} \Big( 1-\frac{m\sigma^2 }{q_{ik}^{(m)}h_{ik}}(t_{ik}-t_{ik}^{(m)})
		\nonumber\\&	
  +\frac{m\sigma^2t_{ik}^{(m)}}{(q_{ik}^{(m)})^2h_{ik}}
		(q_{ik}-q_{ik}^{(m)})
		\Big)
		\leq  \sum_{i=1}^n D_{ik}, 
	\end{align}
	where $q_{ik}^{(m)}$ and $t_{ik}^{(m)}$ are respectively the values of  $q_{ik}$ and $t_{ik}$ in the $m$-th iteration. 
According to \eqref{alg2min2_2eq1} and \eqref{alg2min2_2eq2}, the convex linear inequality constraints are obtained. 
 
	Through replacing constraints (\ref{alg2min2_2}b) and (\ref{alg2min2_2}d) with \eqref{alg2min2_2eq1} and \eqref{alg2min2_2eq2} respectively,  problem \eqref{alg2min2_2} becomes the following convex problem:
 	\begin{subequations}\label{alg2min2_2_2}
		\begin{align}
			\mathop{\min }_{\boldsymbol q,\boldsymbol t,\boldsymbol z}\quad
			&\sum_{n=1}^N z_n \tag{\theequation}\\
			\textrm{s.t.}\quad\:\:
			&  (\ref{alg2min2_1}\text{a}),~(\ref{alg2min2_1}\text{f}),~(\ref{alg2min2_1}\text{g}) \nonumber\\
			& t_{nk}B\log_2\left(1+\frac{q_{nk}h_{nk}}{\sigma^2t_{nk}}\right)\geq {d_{nk}}, \quad \forall n\in \mathcal N,k\in\mathcal K,\\
			&\sum_{i=1}^n \sum_{k=1}^K d_{ik}
		e^{-\frac{m\sigma^2t_{ik}^{(m)}}{q_{ik}^{(m)}h_{ik}}} \Big( 1-\frac{m\sigma^2 }{q_{ik}^{(m)}h_{ik}}(t_{ik}-t_{ik}^{(m)})
		 \nonumber\\&	
 +\frac{m\sigma^2t_{ik}^{(m)}}{(q_{ik}^{(m)})^2h_{ik}}
		(q_{ik}-q_{ik}^{(m)})
		\Big)
	\nonumber\\&		\geq A_n^{1/\alpha} \sum_{i=1}^n \sum_{k=1}^KD_{ik}, \quad \forall n\in\mathcal N,\\
			& \sum_{n=1}^N q_{nk} \leq Q_k, \quad \forall k\in\mathcal K, \\
			&\sum_{i=1}^n d_{ik}  
		e^{-\frac{m\sigma^2t_{ik}^{(m)}}{q_{ik}^{(m)}h_{ik}}} \Big( 1-\frac{m\sigma^2 }{q_{ik}^{(m)}h_{ik}}(t_{ik}-t_{ik}^{(m)})
		 \nonumber\\&	
+\frac{m\sigma^2t_{ik}^{(m)}}{(q_{ik}^{(m)})^2h_{ik}}
		(q_{ik}-q_{ik}^{(m)})
		\Big)
		\nonumber\\&	\leq  \sum_{i=1}^n D_{ik},,k\in\mathcal K,
		\end{align}
	\end{subequations}
 which can be effectively solved through using the existing convex optimization algorithms. 
 Trough solving problem \ref{alg2min2_2_2}, the solution of the original problem \eqref{alg2min2} can be effectively obtained. 
	
       \subsection{Data Offloading Subproblem}
	With given device scheduling and power control variables in problem \eqref{sys1min1},  the  data offloading subproblem can be rewritten as 
	\begin{subequations}\label{alg2min3}
		\begin{align}
			\mathop{\min }_{  \boldsymbol d }\quad
			&\sum_{n=1}^N  \max_{k\in\mathcal K}x_{nk}t_{nk} \tag{\theequation}\\
			\textrm{s.t.}\quad\:\:
			& a_n \geq A_n, \quad \forall n\in\mathcal N,\\
			& \sum_{n=1}^N t_{nk} p_{nk} \leq Q_k, \quad \forall k\in\mathcal K, \\
			&\sum_{i=1}^n d_{ik} c_{ik}(p_{ik})\leq  \sum_{i=1}^n D_{ik}, \quad \forall n\in \mathcal N,k\in\mathcal K,\\
			&d_{nk}\geq 0, \quad \forall n\in\mathcal N,   \\
			&0\leq t_{nk}\leq T_0, n\in\mathcal N.
		\end{align}
	\end{subequations}

	Substituting \eqref{sys_eq3} and \eqref{sys1accu} into problem  \eqref{alg2min3} yields
	\begin{subequations}\label{alg2min3_2}
		\begin{align}
			\mathop{\min }_{  \boldsymbol d }\quad
			&\sum_{n=1}^N  \max_{k\in\mathcal K}\frac{x_{nk}d_{nk} }{r_nk}\tag{\theequation}\\
			\textrm{s.t.}\quad\:\:
			&\sum_{i=1}^n \sum_{k=1}^K d_{ik} e^{-\frac{m\sigma^2}{p_{ik}h_{ik}}} \geq A_n^{1/\alpha} \sum_{i=1}^n \sum_{k=1}^KD_{ik}, \quad \forall n\in\mathcal N,\\
			& \sum_{n=1}^N \frac{p_{nk}d_{nk} }{r_nk} \leq Q_k, \quad \forall k\in\mathcal K, \\
			&\sum_{i=1}^n d_{ik} c_{ik}(p_{ik})\leq  \sum_{i=1}^n D_{ik}, \quad \forall n\in \mathcal N,k\in\mathcal K,\\
			&d_{nk}\geq 0, \quad \forall n\in\mathcal N,   \\
			&0\leq t_{nk}\leq T_0, \quad n\in\mathcal N,
		\end{align}
	\end{subequations}
	which is a linear programming problem and can be effectively solved by using the simplex method.

	\subsection{Algorithm Analysis}
	
	\begin{algorithm}[t]
		\caption{: Alternating Algorithm}
		\begin{algorithmic}[1]
			\STATE
			Initialize a feasible solution $( \boldsymbol x^{(0)}, \boldsymbol d^{(0)}, \boldsymbol p^{(0)}$ of problem (\ref{sys1min1})  and set $l=0$.
			\REPEAT
			\STATE With given $( \boldsymbol d^{(l)},  \boldsymbol p^{(l)})$, obtain the solution $\boldsymbol x^{(l+1)}$ of  problem \eqref{alg2min1}.
			\STATE With given $( \boldsymbol x^{(l+1)},  \boldsymbol d^{(l)})$, obtain the solution $\boldsymbol p^{(l+1)}$ of  problem \eqref{alg2min2}.
			\STATE With given $( \boldsymbol x^{(l+1)},  \boldsymbol p^{(l+1)})$, obtain the solution $\boldsymbol d^{(l+1)}$ of  problem \eqref{alg2min3}.
			\STATE Set $l=l+1$.
			\UNTIL {objective value (\ref{sys1min1}) converges}
		\end{algorithmic}
	\end{algorithm}
	
	Through alternatively solving subproblems \eqref{alg2min1}, \eqref{alg2min2}, and \eqref{alg2min3}, the overall procedure to obtain a solution of  problem (\ref{sys1min1}) can be shown  in Algorithm~1.
	Since the objective value (\ref{sys1min1})  is nonincreasing and the objective value (\ref{sys1min1})  has a limited lower bound (i.e., zero), Algorithm~1 always converges.

 The main complexity of Algorithm 1 lies in solving  subproblems \eqref{alg2min1}, \eqref{alg2min2}, and \eqref{alg2min3}. To solve subproblem \eqref{alg2min1}, the optimal solution can be obtained through Lemma 1 with complexity $\mathcal O(NK\tau)$ for given dual variables. Thus,  the complexity of solving subproblem \eqref{alg2min1} is  $\mathcal O(I_1NK\tau)$, where $I_1$ is the total number of iterations of using the dual method.  To solve subproblem \eqref{alg2min2}, the total complexity is $\mathcal O(I_1N^3 K^3)$ through solving a series of convex subproblems \eqref{alg2min2_2_2}, where $I_2$ is the total number of iterations using the successive convex approximation method.
 To solve subproblem \eqref{alg2min3}, the total complexity is  $\mathcal O(N^{2.5} K^{2.5})$ through the simplex method. 
As a result, the total complexity of Algorithm 1 is  $\mathcal O(I_0I_1NK\tau+I_0I_1N^3 K^3+I_0N^{2.5} K^{2.5})$, where $N_0$ denotes the total number of outer iterations.


\section{Optimization with One Physical Device}
In this section, we consider the case that the DT system only has one physical device. In the following of this paper, we omit the subscript $k$ without loss of generality since we only consider one physical device.   
For the case that $K=1$, problem \eqref{sys1min1}  can be formulated as 
 	\begin{subequations}\label{sys2min2}
		\begin{align}
			\mathop{\min }_{\boldsymbol x, \boldsymbol d, \boldsymbol p}\quad
			&\sum_{n=1}^N   x_{n}t_{n} \tag{\theequation}\\
			\textrm{s.t.}\quad\:\:
			& f\left(\sum_{i=1}^n  d_{i}c_{i}(p_{i}),\sum_{i=1}^n D_{i}\right) \geq A_n, \quad \forall n\in\mathcal N,\\
			& \sum_{n=1}^N t_{n} p_{n} \leq Q, \\
			&\sum_{i=1}^n d_{i} c_{i}(p_{i})\leq  \sum_{i=1}^n D_{i}, \quad \forall n\in \mathcal N,\\
			&\sum_{i=n}^{n+\tau } x_{i} \geq  \beta \tau, \quad \forall n\in\mathcal N,  \\
			&d_{n}\geq 0, \quad \forall n\in\mathcal N,   \\
              &x_{n}\in\{0,1\}, \quad \forall n\in\mathcal N,   \\
			&0\leq p_{n}\leq P,   \quad \forall n\in\mathcal N,\\
   			& 0\leq t_{n}\leq T, \quad \forall n\in\mathcal N,
      \end{align}
	\end{subequations}
where$\boldsymbol x=[x_{1},\cdots,x_{N}]^T$,
	$\boldsymbol d=[d_{1},\cdots,d_{N}]^T$,
 and
	$\boldsymbol p=[p_{1},\cdots,p_{N}]^T$.

It is generally hard to solve problem \eqref{sys2min2} due to the following two main difficulties.
The first difficulty is that problem \eqref{sys2min2} includes both integer and continuous variables, which introduces non-smoothness. 
The second difficulty lies in that the non-convexity constraints in (\ref{sys2min2}a)-(\ref{sys2min2}c). 
Due to the above two difficulties, the joint optimization design is challenging.

In the following, we solve problem \eqref{sys2min2} in an online manner. 
According to (\ref{sys2min2}d), the physical device needs to transmit information at least  $\lceil \beta \tau\rceil$ times in every $\tau+1$ time slots. 
For the sake of analysis, we consider the case that $\lceil \beta \tau\rceil=1$ and the  physical device needs to transmit at least once during every $\tau+1$ time slots. 
Based on this finding, we optimize the location of transmission time, i.e., $x_n=1$ in an inductive scheme. 
The proposed inductive scheme includes two phases: initial phase and recursion phase. 
\subsection{Initial Phase}
In the initial phase, the aim is to find the fist transmission time and the corresponding optimization problem can be formulated as 
 	\begin{subequations}\label{sys2min2_2}
		\begin{align}
			\mathop{\min }_{\{x_n,d_n, p_n\}}\quad
			&\sum_{n=1}^\tau   x_{n}t_{n} \tag{\theequation}\\
			\textrm{s.t.}\quad\:\:
			& f\left(\sum_{i=1}^{\tau+1}  d_{i}c_{i}(p_{i}),\sum_{i=1}^n D_{i}\right) \geq A_n, \nonumber\\&	\quad \forall n=1,\cdots,\tau+1,\\
			& \sum_{n=1}^{\tau+1} t_{n} p_{n} \leq \frac{(\tau+1)Q}{N}, \\
			&\sum_{i=1}^{\tau +1} x_{i} \geq 1, \quad \forall n=1,\cdots,\tau+1,  \\
			&d_{n}\geq 0, \quad \forall n=1,\cdots,\tau+1,\\
                & x_{n}\in\{0,1\}, \quad \forall n=1,\cdots,\tau+1,   \\
			&0\leq p_{n}\leq P,   \quad \forall n=1,\cdots,\tau+1,\\
   			& 0\leq t_{n}\leq T, \quad \forall n=1,\cdots,\tau+1,
      \end{align}
	\end{subequations}
where constraint (\ref{sys2min2}c) is omitted since the accuracy constraint (\ref{sys2min2}a) also reflects the minimum number of transmitted information requirement. 
To analyze the optimal solution of problem (\ref{sys2min2}), we have the following lemma.

\begin{lemma}
 For the optimal solution $(x_n^*,d_n^*, p_n^*)$ of problem (\ref{sys2min2}), we always have $\sum_{i=1}^{\tau +1} x_{i}^*= 1$. 
\end{lemma}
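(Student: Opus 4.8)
The plan is to establish the bound by a consolidation (exchange) argument: I will show that activating more than one transmission slot inside the window can never strictly lower the objective, so that together with the feasibility requirement $\sum_{i=1}^{\tau+1} x_i \geq 1$ imposed by (\ref{sys2min2_2}c) every optimum must satisfy $\sum_{i=1}^{\tau+1} x_i^* = 1$. The first step is to record the structural simplification that drives everything: in constraint (\ref{sys2min2_2}a) the first argument of $f$ is the \emph{aggregate} effective data $S = \sum_{i=1}^{\tau+1} d_i c_i(p_i)$ delivered over the whole window, which is independent of when and how many times the device transmits. With $f(x,y)=(x/y)^\alpha$ this constraint is equivalent to $S \geq S_{\min} := \max_{1\le n\le \tau+1} A_n^{1/\alpha}\sum_{i=1}^n D_i$, so the accuracy requirement only fixes a lower bound on the total effective data and is completely blind to the schedule $\boldsymbol x$.

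Next I would re-express the per-slot cost in terms of delivered effective data. Writing $s_n = d_n c_n(p_n)$, an active slot has delay $t_n = d_n/r_n = \gamma_n(p_n)\, s_n$ and energy $t_n p_n = \eta_n(p_n)\, s_n$, where $\gamma_n(p) = 1/(c_n(p) r_n(p))$ and $\eta_n(p) = p\,\gamma_n(p)$; the point is that for a fixed slot and power both are \emph{linear} in the delivered amount $s_n$. I would then suppose, for contradiction, that an optimal solution activates at least two slots carrying positive data (any active slot with $d_n^*=0$ contributes nothing to either the objective or $S$ and can be de-scheduled without loss). Selecting the most delay-efficient active slot $n^\dagger$, I would transfer the effective data of every other active slot onto $n^\dagger$. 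Since $\gamma_{n^\dagger}(p_{n^\dagger}) \le \gamma_n(p_n)$ for all active $n$, the consolidated delay $\gamma_{n^\dagger}(p_{n^\dagger})\, S_{\min}$ is at most the original $\sum_n \gamma_n(p_n) s_n$, so the objective does not increase.

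The delicate step — and the one I expect to be the main obstacle — is restoring the energy budget (\ref{sys2min2_2}b) after consolidation, because concentrating all the data at $n^\dagger$ can push its energy above $Q' := (\tau+1)Q/N$ whenever $n^\dagger$ is delay-efficient but energy-hungry. Here I would exploit the continuous choice of transmit power at the single retained slot: $\gamma_n(p)$ is monotonically decreasing in $p$ (the product $c_n(p)r_n(p)$ is increasing), while $\eta_n(p)=p\gamma_n(p)$ can be reduced by lowering $p$ on its high-power branch. I would therefore decrease $p_{n^\dagger}$ just enough to meet the energy budget, and then verify, using the strict delay slack created by the merge (namely $\sum_n \gamma_n(p_n)s_n - \gamma_{n^\dagger}(p_{n^\dagger})S_{\min} > 0$), that the resulting delay stays below its original value and that $p_{n^\dagger}\le P$ and $t_{n^\dagger}\le T$ continue to hold. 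Carrying out this power–delay trade-off quantitatively, via the monotonicity of $\gamma_n$ and the shape of $\eta_n$, is the technical heart of the argument.

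This construction produces a feasible solution that activates a single slot and whose objective is no larger than that of the assumed multi-transmission optimum, contradicting optimality unless only one slot was active to begin with. Combined with the lower bound $\sum_{i=1}^{\tau+1}x_i\ge 1$ forced by (\ref{sys2min2_2}c), this yields $\sum_{i=1}^{\tau+1} x_i^* = 1$, completing the proof.
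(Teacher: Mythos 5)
Your overall strategy coincides with the paper's: a proof by contradiction via an exchange argument, consolidating all transmissions in the window onto a single slot and arguing the objective cannot increase. The paper's own proof is exactly this, in two-sentence sketch form: if $\sum_{i=1}^{\tau+1} x_i^* > 1$, move all transmission to the single slot $m=\arg\max_{i=1,\cdots,\tau+1} h_i$ with the best channel and assert a lower objective value. Two of your structural observations are correct and more explicit than anything in the paper: the accuracy constraint (\ref{sys2min2_2}a) depends only on the window-aggregate effective data, hence is blind to the schedule $\boldsymbol x$ (this is also why the paper drops constraint (\ref{sys2min2}c) in the initial phase), and at fixed power both delay and energy are linear in the delivered effective data, so consolidation onto the slot with the smallest per-bit delay $\gamma$ cannot increase the objective.

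As a complete proof, however, your proposal has a gap at precisely the step you flag yourself: restoring the energy budget (\ref{sys2min2_2}b) after the merge. You announce that lowering $p_{n^\dagger}$ re-enters the budget and that the ``strict delay slack created by the merge'' absorbs the resulting delay increase, but this is asserted, not proved, and it is not automatic. Your slot $n^\dagger$ minimizes the per-bit delay $\gamma_n(p_n)$ but may be the worst in per-bit energy $\eta_n(p_n)$, in which case $\eta_{n^\dagger}S_{\min}$ can exceed the budget by an amount unrelated to the delay slack (which can even be zero when the active slots tie in $\gamma$); lowering $p_{n^\dagger}$ then increases $\gamma_{n^\dagger}$ along a curve for which you provide no quantitative control, and without convexity of the $(\gamma,\eta)$ tradeoff a two-slot ``mixture'' of power points could in principle dominate every single-power point. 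Note that the paper's choice of the maximum-gain slot is what makes the repair plausible: increasing $h$ decreases both $\gamma$ and $\eta$ at any fixed power, so each transferred chunk is dominated in delay \emph{and} energy if it could retain its own power --- although the single-power-per-slot restriction means even the paper's construction is not fully watertight, and the paper is silent on energy and on the cap $t_n\le T$ altogether. To close your argument you would need either to consolidate onto a slot dominating in both per-bit metrics (the paper's max-$h$ slot, plus an argument handling the single-power restriction), or to carry out explicitly the monotonicity/convexity computation for $\gamma_{n^\dagger}(p)$ and $\eta_{n^\dagger}(p)$ that you defer.
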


Lemma 2 can be proved by using the contradiction method. If the optimal solution of problem (\ref{sys2min2}) satisfying  $\sum_{i=1}^{\tau +1} x_{i}^*> 1$, we can always construct a new solution that $\sum_{i=1}^{\tau +1} x_{i}^*= 1$ and $x_{m}^*= 1$, $m=\arg_{i=1, \cdots, \tau+1} h_i$, 
with lower objective value, which contradicts that the solution is optimal. 

According to Lemma 2, one can always obtain the optimal transmission time slot, i.e., the time slot with the highest channel gain.
However, the original optimization problem \eqref{sys2min2} requires finding the transmission time for many $\tau+1$ time slots. 
Although the solution $x_{m}^*= 1$ is the optimal solution in the fist $\tau+1$ time slots, this does not guarantee that   $x_{m}^*= 1$ is optimal considering the whole $N$ time slots. 
As a result, we need to conduct the recursion phase to capture the relationship between multiple time slots. 
\subsection{Recursion Phase}
Assume that the previous transmission time is $m$, i.e.,  $x_{m}= 1$. 
In the recursion phase, our aim is to obtain the objective value if the next transmission time is $x_q=1$, $m+1\leq q \leq m+\tau+1$, i.e., the delay minimization problem can be formulated as 
 	\begin{subequations}\label{sys2min2_3}
		\begin{align}
			\mathop{\min }_{\{d_n, p_n\}}\quad
			&     t_{q} \tag{\theequation}\\
			\textrm{s.t.}\quad\:\:
			& f\left(  d_{q}c_{q}(p_{q}),\sum_{i=1}^n D_{i}\right) \geq A_n, \quad \forall n=1,\cdots,\tau+1,\\
			& t_{q} p_{q} \leq \frac{(\tau+1)Q}{N-m}, \\
			&d_{q}\geq 0, \quad \forall n=1,\cdots,\tau+1,\\
			&0\leq p_{q}\leq P,   \quad \forall n=1,\cdots,\tau+1,\\
   			& 0\leq t_{q}\leq T, \quad \forall n=1,\cdots,\tau+1.
      \end{align}
	\end{subequations}
According to \eqref{sys_eq1}, \eqref{sys_eq3} and (\ref{sys2min2_3}d), we have 
\begin{equation}\label{sys2min2_3eq1}
    t_q \geq \frac{d_q}{B\log_2\left(1+\frac{P h_{q}}{\sigma^2}\right)}.
\end{equation}
 Based on \eqref{sys_eq1}, \eqref{sys_eq3} and (\ref{sys2min2_3}d), we further have 
\begin{equation}\label{sys2min2_3eq2}
    t_q \geq \frac{Q}{\bar p_q},
\end{equation}
 where 
\begin{equation}\label{sys2min2_3eq3} 
   \frac{d_q \bar p_q}{B\log_2\left(1+\frac{\bar p_q h_{q}}{\sigma^2}\right)}=\frac{(\tau+1)Q}{N-m}.
\end{equation}
Solving \eqref{sys2min2_3eq3} with Lambert W function yields
\begin{equation}\label{sys2min2_3eq5} 
 \bar p_q=\frac{\sigma^2}{h_{q}}\left(
 \frac{Bh_qQ(\tau+1)}{-\ln2 d_q \sigma^2(N-m)}
 W\left(
 -\ln2
 \right)
 -1\right).
\end{equation}
Based on \eqref{sys2min2_3eq1} to \eqref{sys2min2_3eq5}
\begin{align}\label{sys2min2_3eq5}
    t_q^* =& \max\left\{ 
  \frac{d_q}{B\log_2\left(1+\frac{P h_{q}}{\sigma^2}\right)},\right.
  \\
  &\left.
  \frac{Q}{\frac{\sigma^2}{h_{q}}\left(
 \frac{Bh_qQ(\tau+1)}{-\ln2 d_q \sigma^2(N-m)}
 W\left(
 -\ln2
 \right)
 -1\right)}
    \right\}\triangleq T_{mq}.
\end{align}
 
\subsection{Overall Phase}
Based on \eqref{sys2min2_3eq5}, we show that the optimal delay is $T_{mq}$if the previous transmission time slot  is $m$ and the current transmission slot is $q$. 
Based on the expression of  $T_{mq}$, we can obtain 
\begin{equation}
 \sum_{n=1}^N   x_{n}t_{n}
 =\sum_{n=1}^N   x_{n} \sum_{m=n+1}^{\min\{N,n+\tau+1\}} x_m T_{nm}.
\end{equation}
As a result, 
the original optimization problem \eqref{sys2min2} can be reformulated as
 	\begin{subequations}\label{sys2min3}
		\begin{align}
			\mathop{\min }_{\boldsymbol x, \boldsymbol d }\quad
			&\sum_{n=1}^N    x_{n} \sum_{m=n+1}^{\min\{N,n+\tau+1\}} x_m T_{nm} \tag{\theequation}\\
			\textrm{s.t.}\quad\:\:
			& f\left(\sum_{i=1}^n  d_{i}c_{i}(p_{i}),\sum_{i=1}^n D_{i}\right) \geq A_n, \quad \forall n\in\mathcal N,\\
			&\sum_{i=n}^{n+\tau } x_{i} \geq  1, \quad \forall n\in\mathcal N,  \\
			&d_{n}\geq 0, \quad \forall n\in\mathcal N,   \\
              &x_{n}\in\{0,1\}, \quad \forall n\in\mathcal N.       \end{align}
	\end{subequations}
With given data offloading, the device scheduling belongs to the classic assignment problem, which can be effectively solved using the Hungarian algorithm.
With given user scheduling, the data offloading problem can be solved by using the successive convex approximation  algorithm. 

	\section{Simulation Results}
	In this section, we provide the simulation results of the proposed algorithm. 
	We consider a PN with $K$ physical devices uniformly distributed in a square area with size $200\textrm{m}\times200\textrm{m}$.
	The path loss model is $128.1+37.6\log_{10} d$ ($d$ is in km)
	and the standard deviation of shadow fading is $8$ dB \cite{yang2023energy}.
	The bandwidth is $B=1MHz$ and the noise power spectral density is  $N_0=-174$ dBm/Hz.
	Unless otherwise specified, we set the number of  physical devices $K=10$, 
	the total number of time slots $N=10$,
	equal accuracy requirement $A_1=\cdots=A_n=0.6$, equal arriving data  
	$D_{11}=\cdots=D_{NK}=300$ kbits,  parameter $\tau=3$, $K_0=5$,  equal constant $\beta_1=\cdots=\beta_K=1/3$, and equal maximum transmit power $P_1=\cdots=P_K=1$dBm.
	
	To show the effectiveness of the proposed algorithm, we consider the following two baselines: the random device selection algorithm (labeled as `Random'),  where the power control and data offloading are optimized by using the proposed algorithm, and the adaptive edge  association algorithm in \cite{lu2021adaptive} (labeled as `AEA'), where the device selection and data offloading are solved by using the proposed algorithm.

	\begin{figure}
		\centering
		\includegraphics[width=3.6in]{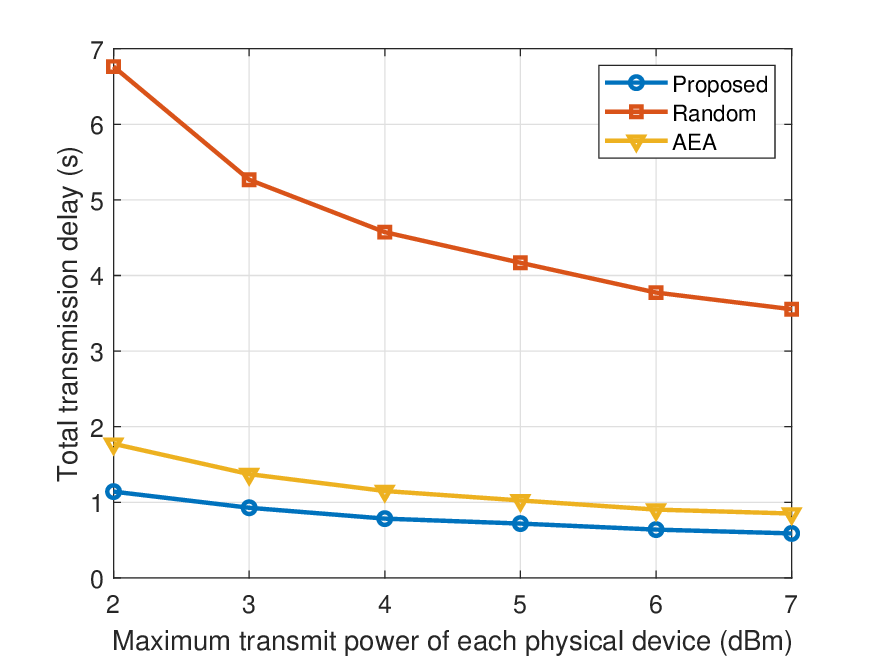}
		\caption{The transmission delay versus the maximum transmit power of  each physical device with $K=14$.}\label{fig12}
	\end{figure}
		
	\begin{figure}
		\centering
		\includegraphics[width=3.6in]{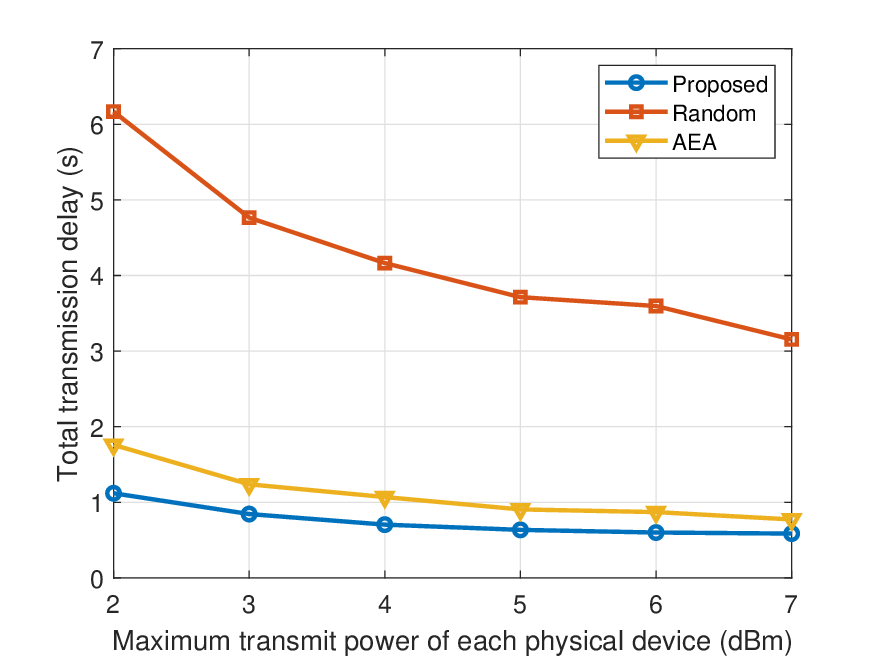}
 		\caption{The transmission delay versus the maximum transmit power of  each physical device with $K=10$.}\label{fig13}
 	\end{figure}

In Figs. \ref{fig12} and \ref{fig13}, we show the transmission delay versus the maximum transmit power of  each physical device with various number of physical devices. From both figures, it can be observed that the total transmission delay of all algorithms decreases with the maximum transmit power and the decreasing speed is decreasing. The reason is  that large maximum transmit power can allow the physical device to transmit with high power, thus reducing the transmission delay. Compared with AEA  in \cite{lu2021adaptive}, the proposed algorithm can yield lower transmission delay expecially when the maximum transmit power is small. The reason is that the proposed algorithm jointly optimizes multiple time slots, while the resource is not cooperatively  scheduled in AEA. 

	\begin{figure}
		\centering
		\includegraphics[width=3.6in]{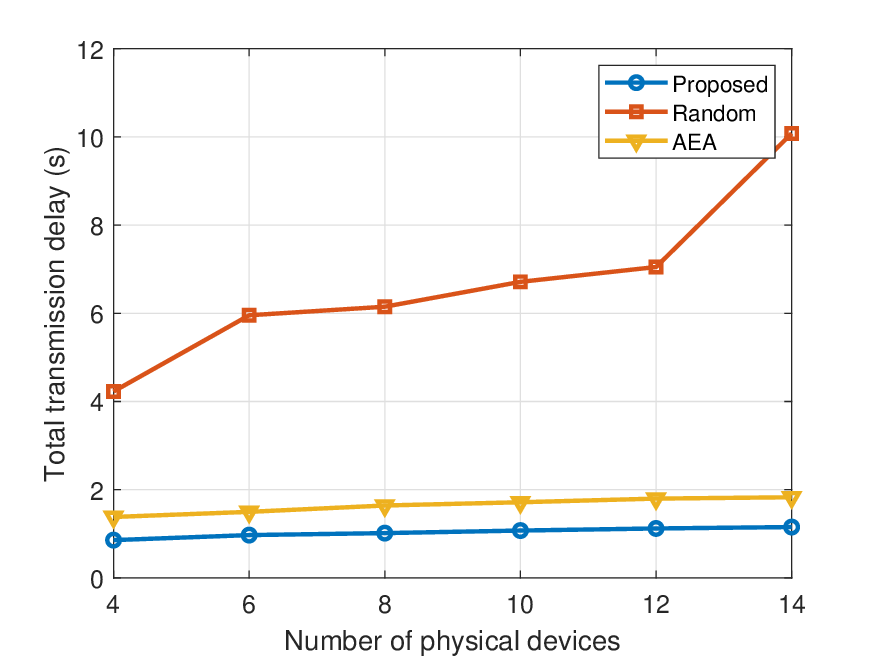}
 		\caption{The transmission delay versus the number of physical devices with maximum transmit power 2 dBm.}\label{fig10}
 	\end{figure}
		
	\begin{figure}
		\centering
		\includegraphics[width=3.6in]{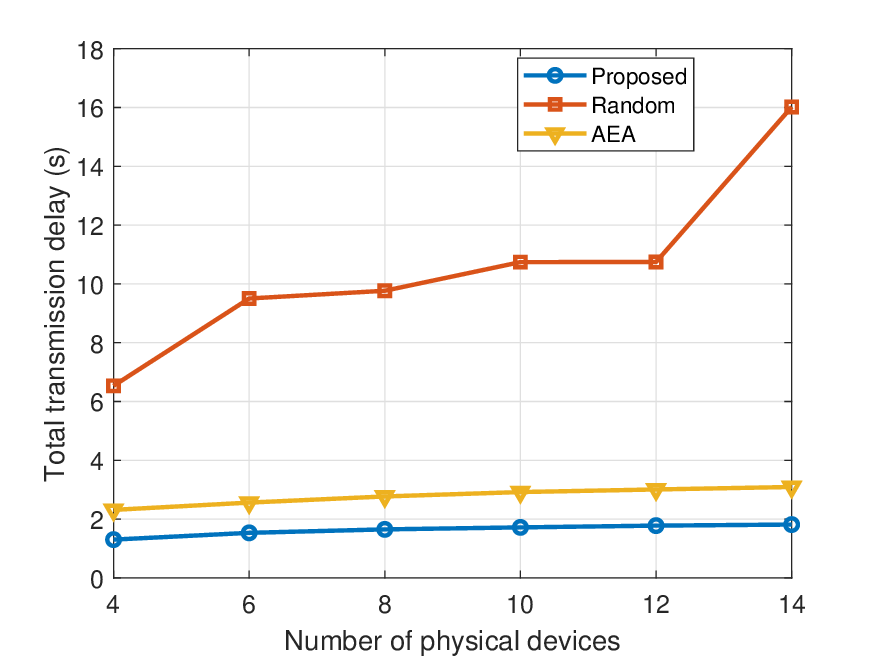}
 		\caption{The transmission delay versus the number of physical devices with maximum transmit power 1 dBm.}\label{fig11}
 	\end{figure}

The transmission delay versus the number of physical devices with different maximum transmit power is depicted in Figs. \ref{fig10} and \ref{fig11}. It can be found that the proposed algorithm always achieve the best performance. With the increase of number of physical devices, the total transmission delay of both the proposed and AEA algorithms slightly increases, while the total transmission delay of random algorithm dynamically increases. 
	
	\begin{figure}
		\centering
		\includegraphics[width=3.6in]{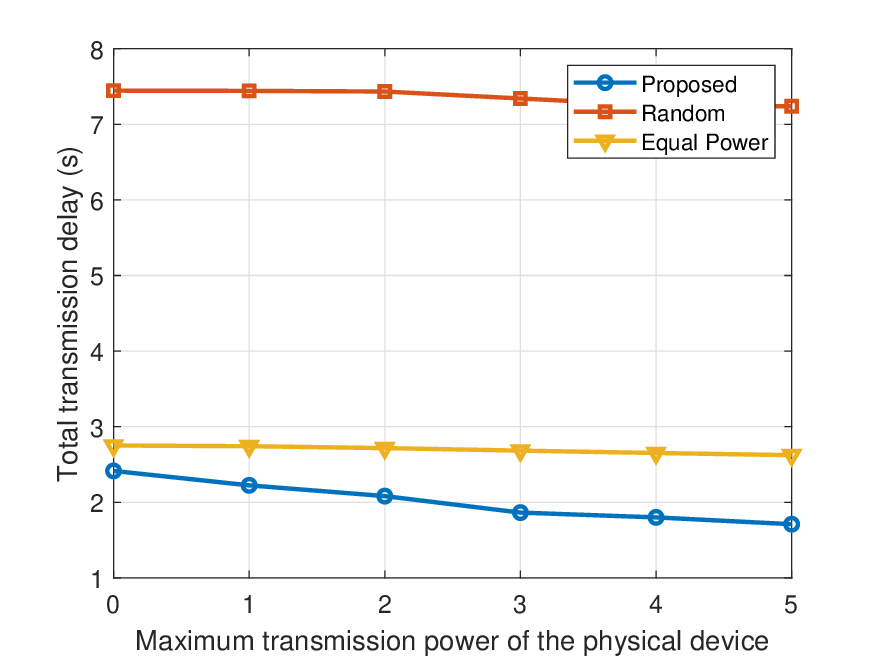}
 		\caption{The transmission delay versus the maximum transmit power of  each physical device.}\label{sim_fig1}
 	\end{figure}
	
	In the following Figs. \ref{sim_fig1} to \ref{sim_fig3}, we also compare the performance of the proposed algorithm with the equal power allocation scheme. 
	Fig. \ref{sim_fig1} shows the transmission delay versus the maximum transmit power of  each physical device. 
	According to this figure, it is observed that the proposed algorithm always achieve the best performance among all algorithms. 
	Compare to random device scheduling algorithm, the other two algorithms can great reduce the transmission delay, which indicates the superiority of device scheduling optimization.
	Compared to equal power allocation algorithm, the proposed algorithm can decrease the transmission delay by up to 51.2\% especially when the maximum transmission power is high. 
	The reason is that the proposed algorithm can dynamically allocate different power for each user based on the wireless channel gains to increase the overall transmission rate, thus leading to low transmission delay.  
	\begin{figure}
		\centering
		\includegraphics[width=3.6in]{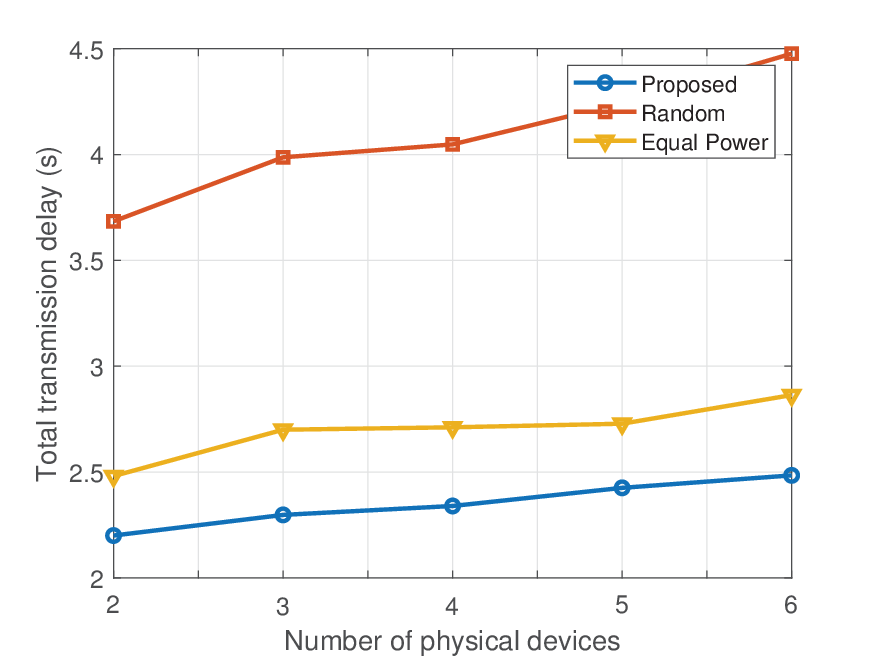}
		\caption{The transmission delay versus the number of physical devices.}\label{sim_fig2}
	\end{figure}
	
	The trend of transmission delay verses the number of physical devices is presented in Fig.~\ref{sim_fig2}.  
	It is found that all algorithm increases with the number of physical devices. This is because more physical devices means more data to offload, which can cause high transmission delay. 
	In this figure, we also can observe that both the proposed algorithm and equal power allocation algorithm can greatly reduce the transmission delay compared the random device scheduling algorithm. 
	
	\begin{figure}
		\centering
		\includegraphics[width=3.6in]{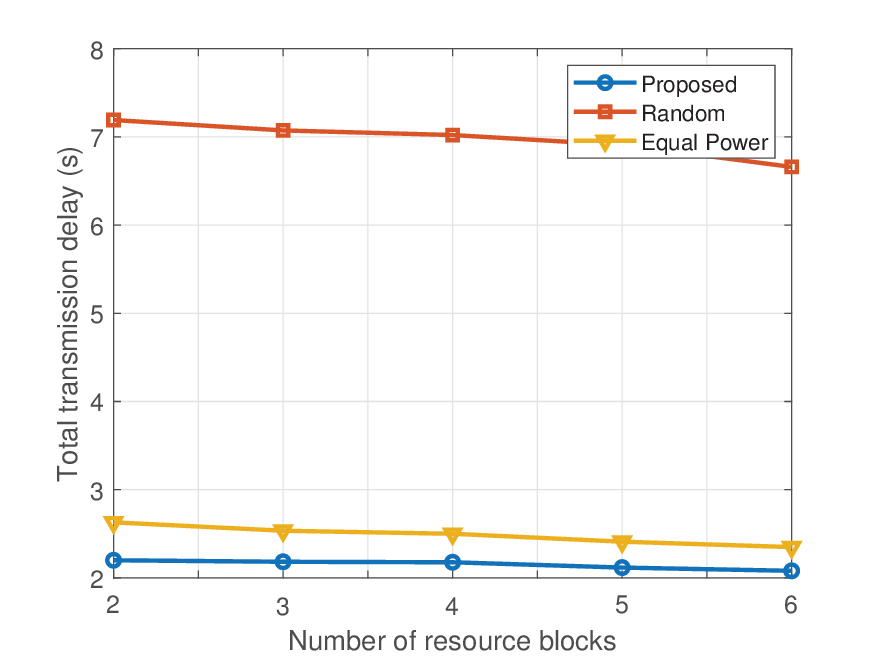}
		\caption{The transmission delay versus the number of resource blocks.}\label{sim_fig3}
	\end{figure}
	
	Fig. \ref{sim_fig2} illustrates the delay performance changes as the number of resource blocks. 
	From this figure, the transmission delay of all algorithms decreases with the number of resource blocks. 
	This is because more resource blocks ensure more devices to upload the data at each time slot, which can decrease the overall transmission time slots and result in low transmission delay.
	It is shown in Fig. 4 that the proposed algorithm is superior over the equal power allocation algorithm especially for small number of resource blocks.

	\section{Conclusions}
	In this paper, we have investigated the delay performance of DT over wireless networks. 
	We have formulated  a joint communication and computation problem so as to
	minimize the total  transmission delay of the network with considering both transmission energy and computation accuracy constraints.
	To solve this problem, we have proposed an alternating algorithm with solving three subproblems iteratively. 
	Numerical results have illustrated that the superiority of the proposed algorithm compared to the conventional schemes in terms of transmission delay, especially for large maximum   transmit power and small resource  blocks.

	\bibliographystyle{IEEEtran}
	\bibliography{MMM} 

\begin{thebibliography}{10}
\providecommand{\url}[1]{#1}
\csname url@samestyle\endcsname
\providecommand{\newblock}{\relax}
\providecommand{\bibinfo}[2]{#2}
\providecommand{\BIBentrySTDinterwordspacing}{\spaceskip=0pt\relax}
\providecommand{\BIBentryALTinterwordstretchfactor}{4}
\providecommand{\BIBentryALTinterwordspacing}{\spaceskip=\fontdimen2\font plus
\BIBentryALTinterwordstretchfactor\fontdimen3\font minus \fontdimen4\font\relax}
\providecommand{\BIBforeignlanguage}[2]{{%
\expandafter\ifx\csname l@#1\endcsname\relax
\typeout{** WARNING: IEEEtran.bst: No hyphenation pattern has been}%
\typeout{** loaded for the language `#1'. Using the pattern for}%
\typeout{** the default language instead.}%
\else
\language=\csname l@#1\endcsname
\fi
#2}}
\providecommand{\BIBdecl}{\relax}
\BIBdecl

\bibitem{khan2022digital}
L.~U. Khan, W.~Saad, D.~Niyato, Z.~Han, and C.~S. Hong, ``Digital-twin-enabled {6G}: Vision, architectural trends, and future directions,'' \emph{IEEE Communications Magazine}, vol.~60, no.~1, pp. 74--80, 2022.

\bibitem{jiang2022towards}
Y.~Jiang, L.~Peng, X.~Zhu, J.~Ma, X.~Yuan, Z.~Liu, and Q.~Wang, ``Towards metaverse with diversified applications: Service native architecture,'' \emph{Highlights in Science, Engineering and Technology}, vol.~24, pp. 31--39, 2022.

\bibitem{tang2022roadmap}
F.~Tang, X.~Chen, M.~Zhao \emph{et~al.}, ``The roadmap of communication and networking in {6G} for the metaverse,'' \emph{IEEE Wireless Commun.}, 2022.

\bibitem{cai2022compute}
Y.~Cai, J.~Llorca, A.~M. Tulino \emph{et~al.}, ``Compute-and data-intensive networks: The key to the metaverse,'' \emph{arXiv preprint arXiv:2204.02001}, 2022.

\bibitem{wang2022survey}
Y.~Wang, Z.~Su, N.~Zhang \emph{et~al.}, ``A survey on metaverse: Fundamentals, security, and privacy,'' \emph{IEEE Commun. Surveys \& Tutorials}, pp. 1--1, 2022.

\bibitem{xu2022full}
M.~Xu, W.~C. Ng, W.~Y.~B. Lim \emph{et~al.}, ``A full dive into realizing the edge-enabled metaverse: Visions, enabling technologies, and challenges,'' \emph{arXiv preprint arXiv:2203.05471}, 2022.

\bibitem{chang20226g}
L.~Chang, Z.~Zhang, P.~Li \emph{et~al.}, ``{6G}-enabled edge {AI} for metaverse: Challenges, methods, and future research directions,'' \emph{arXiv preprint arXiv:2204.06192}, 2022.

\bibitem{sun2016edgeiot}
X.~Sun and N.~Ansari, ``Edgeiot: Mobile edge computing for the internet of things,'' \emph{IEEE Communications Magazine}, vol.~54, no.~12, pp. 22--29, 2016.

\bibitem{saad2019vision}
W.~Saad, M.~Bennis, and M.~Chen, ``A vision of {6G} wireless systems: {A}pplications, trends, technologies, and open research problems,'' \emph{IEEE Net.}, vol.~34, no.~3, pp. 134--142, May/June 2020.

\bibitem{jones2020characterising}
D.~Jones, C.~Snider, A.~Nassehi, J.~Yon, and B.~Hicks, ``Characterising the digital twin: A systematic literature review,'' \emph{CIRP Journal of Manufacturing Science and Technology}, vol.~29, pp. 36--52, 2020.

\bibitem{liu2021review}
M.~Liu, S.~Fang, H.~Dong, and C.~Xu, ``Review of digital twin about concepts, technologies, and industrial applications,'' \emph{Journal of Manufacturing Systems}, vol.~58, pp. 346--361, 2021.

\bibitem{xu2022edge}
W.~Xu, Z.~Yang, D.~W.~K. Ng, M.~Levorato, Y.~C. Eldar \emph{et~al.}, ``Edge learning for {B5G} networks with distributed signal processing: Semantic communication, edge computing, and wireless sensing,'' \emph{arXiv preprint arXiv:2206.00422}, 2022.

\bibitem{zhang2018optimal}
Y.~Zhang, P.~You, and L.~Cai, ``Optimal charging scheduling by pricing for ev charging station with dual charging modes,'' \emph{IEEE Transactions on Intelligent Transportation Systems}, vol.~20, no.~9, pp. 3386--3396, 2018.

\bibitem{he2018preserving}
J.~He, L.~Cai, and X.~Guan, ``Preserving data-privacy with added noises: Optimal estimation and privacy analysis,'' \emph{IEEE Transactions on Information Theory}, vol.~64, no.~8, pp. 5677--5690, 2018.

\bibitem{han2013optimizing}
T.~Han and N.~Ansari, ``On optimizing green energy utilization for cellular networks with hybrid energy supplies,'' \emph{IEEE Transactions on Wireless Communications}, vol.~12, no.~8, pp. 3872--3882, 2013.

\bibitem{zhou2020pirate}
S.~Zhou, H.~Huang, W.~Chen, P.~Zhou, Z.~Zheng, and S.~Guo, ``Pirate: A blockchain-based secure framework of distributed machine learning in 5g networks,'' \emph{IEEE Network}, vol.~34, no.~6, pp. 84--91, 2020.

\bibitem{qu2021service}
Y.~Qu, H.~Dai, H.~Wang, C.~Dong, F.~Wu, S.~Guo, and Q.~Wu, ``Service provisioning for uav-enabled mobile edge computing,'' \emph{IEEE Journal on Selected Areas in Communications}, vol.~39, no.~11, pp. 3287--3305, 2021.

\bibitem{arnold2020cooperative}
E.~Arnold, M.~Dianati, R.~de~Temple, and S.~Fallah, ``Cooperative perception for 3d object detection in driving scenarios using infrastructure sensors,'' \emph{IEEE Transactions on Intelligent Transportation Systems}, vol.~23, no.~3, pp. 1852--1864, 2020.

\bibitem{le2022survey}
L.~Le~Mero, D.~Yi, M.~Dianati, and A.~Mouzakitis, ``A survey on imitation learning techniques for end-to-end autonomous vehicles,'' \emph{IEEE Transactions on Intelligent Transportation Systems}, 2022.

\bibitem{arnold2019survey}
E.~Arnold, O.~Y. Al-Jarrah, M.~Dianati, S.~Fallah, D.~Oxtoby, and A.~Mouzakitis, ``A survey on 3d object detection methods for autonomous driving applications,'' \emph{IEEE Transactions on Intelligent Transportation Systems}, vol.~20, no.~10, pp. 3782--3795, 2019.

\bibitem{zhan2020big}
Y.~Zhan, P.~Li, K.~Wang, S.~Guo, and Y.~Xia, ``Big data analytics by crowdlearning: Architecture and mechanism design,'' \emph{IEEE Network}, vol.~34, no.~3, pp. 143--147, 2020.

\bibitem{yang2023secure}
Z.~Yang, M.~Chen, G.~Li, Y.~Yang, and Z.~Zhang, ``Secure semantic communications: Fundamentals and challenges,'' \emph{arXiv preprint arXiv:2301.01421}, 2023.

\bibitem{yu20226g}
J.~Yu, A.~Alhilal, P.~Hui, and D.~H. Tsang, ``6{G} mobile-edge empowered metaverse: Requirements, technologies, challenges and research directions,'' \emph{arXiv preprint arXiv:2211.04854}, 2022.

\bibitem{akyildiz2022metaverse}
I.~F. Akyildiz, ``Metaverse: Challenges for extended reality and holographic-type communication in the next decade,'' in \emph{2022 ITU Kaleidoscope-Extended reality--How to boost quality of experience and interoperability}.\hskip 1em plus 0.5em minus 0.4em\relax IEEE, 2022, pp. 1--2.

\bibitem{chen2023big}
Z.~Chen, Z.~Zhang, and Z.~Yang, ``Big {AI} models for {6G} wireless networks: Opportunities, challenges, and research directions,'' 2023.

\bibitem{zhan2020learning}
Y.~Zhan, P.~Li, Z.~Qu, D.~Zeng, and S.~Guo, ``A learning-based incentive mechanism for federated learning,'' \emph{IEEE Internet of Things Journal}, vol.~7, no.~7, pp. 6360--6368, 2020.

\bibitem{wang2022task}
K.~Wang, J.~Jin, Y.~Yang, T.~Zhang, A.~Nallanathan, C.~Tellambura, and B.~Jabbari, ``Task offloading with multi-tier computing resources in next generation wireless networks,'' \emph{IEEE J. Sel. Areas Commun.}, vol.~41, no.~2, pp. 306--319, 2022.

\bibitem{wang2023task}
K.~Wang, D.~Niyato, W.~Chen, and A.~Nallanathan, ``Task-oriented delay-aware multi-tier computing in cell-free massive {MIMO} systems,'' \emph{IEEE J. Sel. Areas Commun.}, 2023.

\bibitem{kuruvatti2022empowering}
N.~P. Kuruvatti, M.~A. Habibi, S.~Partani, B.~Han, A.~Fellan, and H.~D. Schotten, ``Empowering {6G} communication systems with digital twin technology: {A} comprehensive survey,'' \emph{IEEE Access}, 2022.

\bibitem{lu2020low}
Y.~Lu, X.~Huang, K.~Zhang, S.~Maharjan, and Y.~Zhang, ``Low-latency federated learning and blockchain for edge association in digital twin empowered {6G} networks,'' \emph{IEEE Transactions on Industrial Informatics}, vol.~17, no.~7, pp. 5098--5107, 2020.

\bibitem{lu2021adaptive}
Y.~Lu, S.~Maharjan, and Y.~Zhang, ``Adaptive edge association for wireless digital twin networks in {6G},'' \emph{IEEE Internet of Things Journal}, vol.~8, no.~22, pp. 16\,219--16\,230, 2021.

\bibitem{pengnoo2020digital}
M.~Pengnoo, M.~T. Barros, L.~Wuttisittikulkij, B.~Butler, A.~Davy, and S.~Balasubramaniam, ``Digital twin for metasurface reflector management in {6G} terahertz communications,'' \emph{IEEE Access}, vol.~8, pp. 114\,580--114\,596, 2020.

\bibitem{zheng2021learning}
J.~Zheng, T.~H. Luan, L.~Gao, Y.~Zhang, and Y.~Wu, ``Learning based task offloading in digital twin empowered internet of vehicles,'' \emph{arXiv preprint arXiv:2201.09076}, 2021.

\bibitem{ruah2022digital}
C.~Ruah, O.~Simeone, and B.~Al-Hashimi, ``Digital twin-based multiple access optimization and monitoring via model-driven bayesian learning,'' \emph{arXiv preprint arXiv:2210.05582}, 2022.

\bibitem{hashash2022edge}
O.~Hashash, C.~Chaccour, and W.~Saad, ``Edge continual learning for dynamic digital twins over wireless networks,'' \emph{arXiv preprint arXiv:2204.04795}, 2022.

\bibitem{chen2021a}
M.~Chen, Z.~Yang, W.~Saad, C.~Yin, H.~V. Poor, and S.~Cui, ``A joint learning and communications framework for federated learning over wireless networks,'' \emph{IEEE Trans. Wireless Commun.}, vol.~20, no.~1, pp. 269--283, Jan. 2021.

\bibitem{boyd2004convex}
S.~Boyd, S.~P. Boyd, and L.~Vandenberghe, \emph{Convex optimization}.\hskip 1em plus 0.5em minus 0.4em\relax Cambridge university press, 2004.

\bibitem{yang2023energy}
Z.~Yang, M.~Chen, Z.~Zhang, and C.~Huang, ``Energy efficient semantic communication over wireless networks with rate splitting,'' \emph{IEEE Journal on Selected Areas in Communications}, vol.~41, no.~5, pp. 1484--1495, 2023.

\end{thebibliography}
	
\end{document}